\newtheorem{definition}{Definition}
\newtheorem{theorem}{Theorem}[section]
\newtheorem{lemma}[theorem]{Lemma}
\newcommand{\G}{\mathcal{G}}
\newcommand{\hg}{\mathcal{H}}
\newcommand{\V}{\mathcal{V}}
\newcommand{\E}{\mathcal{E}}
\newcommand{\set}{\mathcal{S}}
\newcommand{\T}{\mathcal{T}}
\newcommand{\R}{\mathbb{R}}
\newcommand{\cut}{\mathrm{cut}}
\newcommand{\st}{\quad\mathrm{s.t.}\,\,}
\begin{document}

\begin{frontmatter}

\begin{fmbox}
\dochead{Research}

\title{Hypergraph cuts with edge-dependent vertex weights}

\author[
  addressref={aff1},                   
  corref={aff1},                       
  email={yz126@rice.edu}               
]{\inits{Y.}\fnm{Yu} \snm{Zhu}}
\author[
  addressref={aff1},
  email={segarra@rice.edu}
]{\inits{S.}\fnm{Santiago} \snm{Segarra}}

\address[id=aff1]{
  \orgdiv{Department of Electrical and Computer Engineering},  
  \orgname{Rice University},                                   
  \street{6100 Main Street},
  \postcode{77005},
  \city{Houston},                                              
  \cny{USA}                                                    
}

\end{fmbox}

\begin{abstractbox}

\begin{abstract} 
We develop a framework for incorporating edge-dependent vertex weights (EDVWs) into the hypergraph minimum $s$-$t$ cut problem.
These weights are able to reflect different importance of vertices within a hyperedge, thus leading to better characterized cut properties.
More precisely, we introduce a new class of hyperedge splitting functions that we call EDVWs-based, where the penalty of splitting a hyperedge depends only on the sum of EDVWs associated with the vertices on each side of the split.
Moreover, we provide a way to construct submodular EDVWs-based splitting functions and prove that a hypergraph equipped with such splitting functions can be reduced to a graph sharing the same cut properties.
In this case, the hypergraph minimum $s$-$t$ cut problem can be solved using well-developed solutions to the graph minimum $s$-$t$ cut problem.
In addition, we show that an existing sparsification technique can be easily extended to our case and makes the reduced graph smaller and sparser, thus further accelerating the algorithms applied to the reduced graph.
Numerical experiments using real-world data demonstrate the effectiveness of our proposed EDVWs-based splitting functions in comparison with the all-or-nothing splitting function and cardinality-based splitting functions commonly adopted in existing work.
\end{abstract}

\begin{keyword}
\kwd{Hypergraphs}
\kwd{Minimum $s$-$t$ cut}
\kwd{Edge-dependent vertex weights}
\kwd{Hyperedge expansion}
\kwd{Sparsification}
\end{keyword}

\end{abstractbox}

\end{frontmatter}

\section{Introduction}\label{s:intro}

The \emph{graph} minimum $s$-$t$ cut problem, or equivalently the maximum $s$-$t$ flow problem, is a fundamental problem in network science.
A cut is a bipartition of the graph vertices and its weight is computed by summing the weights of the edges crossing the cut.
The problem aims to find the minimum weight cut that disconnects the source vertex $s$ from the sink vertex $t$.
It has various applications such as bipartite matching~\citep{cherkassky1998augment, lovasz2009matching}, network reliability~\citep{colbourn1991combinatorial, ramanathan1987counting}, distributed computing~\citep{bokhari1987assignment}, image segmentation~\citep{boykov2006graph, boykov2004experimental} and very large scale integration (VLSI) circuit design~\citep{leighton1999multicommodity, li1995linear}.
Classical solutions to this problem include ``augmenting paths"-based algorithms~\citep{ford1956maximal} and ``push-relabel" style algorithms~\citep{goldberg1988new}, to name a few.

A natural extension is the \emph{hypergraph} minimum $s$-$t$ cut problem. 
Graphs are limited to modeling pairwise relations, while hypergraphs generalize the notion of an edge to a hyperedge, which can represent higher-order interactions connecting more than two vertices.
Many practical problems can be better modeled by hypergraphs~\citep{papa2007hypergraph, schaub2021signal}.
For instance, in the columnwise decomposition of a sparse matrix for parallel sparse-matrix vector multiplication, hypergraphs provide a more accurate representation for the communication volume requirement than graphs, where vertices and hyperedges are respectively used to model columns of the sparse matrix and the non-zero pattern of each row~\citep{catalyurek1999hypergraph}.
In image segmentation, hypergraphs are leveraged to describe higher-order relations among superpixels~\citep{ding2008image, kim2011higher}.
In VLSI circuit design, vertices and hyperedges respectively represent gates and signal nets~\citep{karypis1999multilevel}.

The weight of a hypergraph cut is defined as the sum of splitting penalties associated with every hyperedge.
Different from the graph case, there may exist multiple ways to split a hyperedge.
Consequently, for each hyperedge $e$, we consider a splitting function $w_e:2^e\to\R_{\geq 0}$ that assigns a penalty to every possible cut of $e$ where $2^e$ denotes the power set of $e$.
For any $\set\subseteq e$, $w_e(\set)$ indicates the penalty of partitioning $e$ into $\set$ and $e\setminus\set$~\citep{li2017inhomogeneous, veldt2020hypergraph}.
Existing works mainly adopt two kinds of splitting functions.
One is the so-called all-or-nothing splitting function in which an identical penalty is charged if the hyperedge is cut no matter how it is cut~\citep{hein2013total}.
It is a straightforward extension of the graph case since an edge in a graph is associated with only one non-zero splitting penalty.
Another slightly more general type is the class of cardinality-based splitting functions where the splitting penalty $w_e(\set)$ depends only on the number of vertices placed into $\set$~\citep{veldt2020hypergraph, zhou2006learning}.

There are two major approaches for solving the minimum $s$-$t$ cut problem in hypergraphs.
One is to adopt submodular splitting functions for all hyperedges (cf.~\eqref{e:subf} for the definition of submodular functions), then the hypergraph minimum $s$-$t$ cut problem can be solved using submodular function minimizers~\citep{li2018submodular, veldt2020hypergraph}.
Another more efficient approach is to reduce the hypergraph to a graph that has the same cut properties and then leverage existing solutions to the graph minimum $s$-$t$ cut problem~\citep{ihler1993modeling, lawler1973cutsets, li2017inhomogeneous, veldt2020hypergraph}.
The reduction is generally implemented by expanding every hyperedge into a small graph possibly with additional auxiliary vertices and then concatenating these small graphs to form the final graph.
It has been proved in~\citep{veldt2020hypergraph} that, for cardinality-based splitting functions, the hypergraph cut problem is reducible to a graph cut problem if and only if the splitting functions are submodular.
Moreover,~\cite{veldt2020hypergraph} proposes a graph reduction method for an arbitrary submodular cardinality-based splitting function where a hyperedge $e$ is expanded into a graph that has up to $O(|e|)$ auxiliary vertices and $O(|e|^2)$ edges in the worst case.
This may result in large and dense graphs thus affecting the efficiency of algorithms applied to the reduced graph.
To tackle this problem, sparsification techniques have been developed which try to approximate the hypergraph cut using a sparse graph with fewer auxiliary vertices~\citep{bansal2019new, benczur1996approximating, benson2020augmented, chekuri2018minimum, kogan2015sketching}.
A follow-up paper~\citep{benson2020augmented} proposes a sparsification method for approximating hypergraph cuts defined by submodular cardinality-based splitting functions.
The proposed method reduces the number of auxiliary vertices and the number of edges needed to expand a hyperedge $e$ to $O(\epsilon^{-1}\log |e|)$ and $O(\epsilon^{-1}|e|\log |e|)$ respectively, where $\epsilon$ is the approximation tolerance parameter. 

A disadvantage of the all-or-nothing splitting function as well as cardinality-based ones is that they treat all the vertices in a hyperedge equally while in practice these vertices might contribute differently to the hyperedge.
Such information can be captured by edge-dependent vertex weights (EDVWs):
Every vertex $v$ is associated with a weight $\gamma_e(v)$ for each incident hyperedge $e$ that reflects the contribution of $v$ to $e$~\citep{chitra2019random}.
The hypergraph model with EDVWs is very relevant in practice.
For example, an e-commerce system can be modeled as a hypergraph with EDVWs where vertices and hyperedges respectively correspond to users and products, and EDVWs represent the quantity of a product bought by a user~\citep{li2018tail}.
EDVWs can also be used to model the author positions in a co-authored manuscript~\citep{chitra2019random}, the probability of a pixel belonging to a segment in image segmentation~\citep{ding2010interactive}, and the relevance of a word to a document in text mining~\citep{hayashi2020hypergraph, zhu2021co}, to name a few.

\textbf{Contributions}
In this paper, we propose a new class of splitting functions that we call EDVWs-based.
In an EDVWs-based splitting function, the splitting penalty $w_e(\set)$ depends only on the sum of EDVWs in $\set$, namely $\sum_{v\in\set}\gamma_e(v)$.
Hence, we can write $w_e(\set)=g_e(\sum_{v\in\set}\gamma_e(v))$ for some continuous function $g_e$.
We prove that $w_e$ is submodular if $g_e$ is concave.
The submodularity is necessary for graph reducibility.
We study the EDVWs-based counterparts of four cardinality-based splitting functions in existing work and show that they are graph reducible.
Moreover, we prove that any EDVWs-based splitting function with a concave $g_e$ is graph reducible and provide a way for such a reduction.
We also show that the sparsification technique proposed in~\citep{benson2020augmented} can be easily adapted to the EDVWs-based case. 
The size and the density of the reduced graph depend on both the shape of $g_e$ and the EDVWs' values.
In a nutshell, our paper provides a framework to study hypergraph cut problems incorporating EDVWs and generalizes the results presented in~\citep{benson2020augmented, veldt2020hypergraph} from cardinality-based splitting functions to EDVWs-based ones.

\textbf{Paper outline}
The rest of this paper is structured as follows.
Preliminary concepts and related work about graph and hypergraph cut problems are reviewed in Section~\ref{s:prep}.
The main theoretical results are presented in Section~\ref{s:prop}, where the hypergraph model with EDVWs is introduced in Section~\ref{ss:edvw_model}, the proposed EDVWs-based splitting functions are studied in Section~\ref{ss:edvw_sf}, the graph reducibility results are stated in Section~\ref{ss:edvw_hg2g}, and the sparsification technique is discussed in Section~\ref{ss:sparsify}.
The numerical results shown in Section~\ref{s:exp} validate the effectiveness of introducing EDVWs into hypergraph cuts.
Closing remarks are included in Section~\ref{s:end}. 
 
\section{Preliminaries and related work}\label{s:prep}

\subsection{Graph cuts}\label{ss:g_cut}

Let $\G=(\V,\E,\mathbf{W})$ denote a weighted and possibly directed graph where $\V$ is the vertex set, $\E$ is the edge set, and $\mathbf{W}$ is the weighted adjacency matrix whose entry $W_{uv}$ denotes the weight of the edge from $u$ to $v$.
A cut is a partition of the vertex set $\V$ into two disjoint, non-empty subsets denoted by $\set$ and its complement $\V\setminus\set$.
The weight of the cut is defined as 
\begin{equation}\label{e:g_cut}
\textstyle \cut_{\G}(\set) = \sum_{u\in\set, v\in\V\setminus\set} W_{uv}.
\end{equation}

Given two vertices $s,t$ in the graph, the minimum $s$-$t$ cut problem aims to find the minimum weight cut that separates $s$ and $t$. 
Formally, the problem can be written as
\begin{equation}\label{e:g_st}
\textstyle \min_{\emptyset\subset\set\subset\V} \, \cut_{\G}(\set) \st s\in\set, t\in\V\setminus\set.
\end{equation}

The minimum $s$-$t$ cut problem is the dual of the maximum $s$-$t$ flow problem.
There exist a number of algorithms for the min-cut/max-flow problem and a summary can be found in~\citep{goldberg1998recent}.
Moreover, it is established in \citep{orlin2013max} that the min-cut/max-flow problem is solvable in $O(|\V||\E|)$ time. We also notice that a recent paper \citep{chen2022maximum} provides an algorithm that solves this problem in almost-linear time.

\subsection{Hypergraph cuts}\label{ss:hg_cut}

Let $\hg=(\V,\E)$ be a hypergraph where $\V$ and $\E$ respectively denote the vertex set and the hyperedge set.
Unlike the graph case, a hyperedge can connect more than two vertices thus there may exist multiple ways to split a hyperedge.
For each hyperedge $e\in\E$, we introduce a splitting function $w_e:2^e\to\R_{\geq 0}$ that assigns a non-negative penalty to every possible cut of $e$~\citep{li2017inhomogeneous, veldt2020hypergraph}.
The splitting function satisfies $w_e(\emptyset)=w_e(e)=0$, in other words, a penalty of zero is assigned when the hyperedge is not cut.
Moreover, the splitting function is symmetric if it satisfies $w_e(\set)=w_e(e\setminus\set)$ for any $\set\subseteq e$.
The weight of the hypergraph cut induced by $\set\subseteq\V$ is defined as the sum of splitting penalties associated with every hyperedge, i.e.,
\begin{equation}\label{e:hg_cut}
\textstyle \cut_{\hg}(\set) = \sum_{e\in\E} w_e(\set\cap e).
\end{equation}

There are mainly two types of splitting functions in existing work:
(i) An \emph{all-or-nothing} splitting function assigns the same penalty to every possible cut of the hyperedge regardless of how its vertices are separated~\citep{hein2013total}.
More precisely, $w_e(\set)$ is equal to some positive constant, e.g., the hyperedge weight, for all non-empty $\set\subset e$ and $w_e(\set)=0$ if $\set\in\{\emptyset, e\}$.
(ii) A splitting function is \emph{cardinality-based} if $w_e(\set_1)=w_e(\set_2)$ for all $\set_1,\set_2\subseteq e$ whenever $|\set_1|=|\set_2|$~\citep{veldt2020hypergraph}.
In other words, the value of $w_e(\set)$ depends only on the cardinality of $\set$.
Several examples are given in Table~\ref{table:sf}.

Similar to the graph case, the hypergraph minimum $s$-$t$ cut problem is formulated as
\begin{equation}\label{e:hg_st}
\textstyle \min_{\emptyset\subset\set\subset\V} \, \cut_{\hg}(\set) \st s\in\set, t\in\V\setminus\set.
\end{equation}

For a finite set $\set$, a set function $F:2^\set\to\R$ is called \emph{submodular} if 
\begin{equation}\label{e:subf}
F(\set_1\cup\{v\}) - F(\set_1) \geq F(\set_2\cup\{v\}) - F(\set_2)
\end{equation}
for every $\set_1\subseteq\set_2\subset\set$ and every $v\in\set\setminus\set_2$~\citep{bach2013learning}.
If the splitting function $w_e$ associated with every hyperedge $e\in\E$ is submodular, the resulting hypergraph cut in the form of a sum of submodular functions is also submodular. 
In this case, the hypergraph minimum $s$-$t$ cut problem can be solved using general submodular function minimizers~\citep{grotschel1981ellipsoid, iwata2003faster, iwata2001combinatorial, iwata2009simple, orlin2009faster, schrijver2000combinatorial} or minimizers for decomposable submodular functions~\citep{ene2017decomposable, kolmogorov2012minimizing, li2018revisiting}. 
The all-or-nothing splitting function and the cardinality-based ones listed in Table~\ref{table:sf} are all submodular.

\subsection{Graph reducibility}\label{ss:hg2g}

Since algorithms for the graph minimum $s$-$t$ cut problem are more efficient than algorithms for general submodular function minimization, another way of solving hypergraph cut problems is to reduce the hypergraph to a graph that shares the same or similar cut properties~\citep{ihler1993modeling, lawler1973cutsets, li2017inhomogeneous}.
The reduction is generally accomplished via hyperedge expansions.
Recently, a generalized hyperedge expansion has been formulated in~\citep{veldt2020hypergraph}, which projects a hyperedge onto a graph allowing directed edges and additional vertices.
The formal definition is given below.

\begin{definition}[Gadget splitting function~\citep{veldt2020hypergraph}]
A gadget associated with a hyperedge $e$ is a weighted and possibly directed graph $\G_e=(\V',\E')$ with vertex set $\V'=e\cup\hat{\V}$ where $\hat{\V}$ is a set of auxiliary vertices.
The corresponding gadget splitting function $\hat{w}_e:2^e\to\R_{\geq 0}$ is defined as
\begin{equation}\label{e:gadget_w}
\textstyle	\hat{w}_e(\set) = \min_{\T\subseteq\V', \T\cap e=\set} \, \cut_{\G_e}(\T).
\end{equation}
\end{definition}

A hyperedge splitting function is \emph{graph reducible} if it is identical to some gadget splitting function.  
A hypergraph cut function defined as~\eqref{e:hg_cut} or the hypergraph minimum $s$-$t$ cut problem is graph reducible if all its hyperedge splitting functions are graph reducible.
It has been proved in~\citep{veldt2020hypergraph} that every gadget splitting function $\hat{w}_e$ defined as~\eqref{e:gadget_w} is submodular.
Hence, if a hyperedge splitting function is graph reducible, it must be submodular.

In the following, we give several examples of splitting functions that have been shown to be graph reducible in existing works.
The all-or-nothing splitting function is graph reducible and can be constructed from the Lawler gadget described as follows.

\emph{Lawler gadget}~\citep{lawler1973cutsets}. 
The Lawler gadget replaces a hyperedge $e$ with a digraph defined on the vertex set $\V'=e\cup\{e',e''\}$ where $e',e''$ are two auxiliary vertices.
For each $v\in e$, add a directed edge of weight infinity from $v$ to $e'$ and a directed edge of weight infinity from $e''$ to $v$.
Finally, add a directed edge of weight equal to the hyperedge weight from $e'$ to $e''$.

The cardinality-based splitting functions listed in Table~\ref{table:sf} are all graph reducible and correspond to the following gadgets, respectively.

\emph{Clique gadget}~\citep{agarwal2006higher}. 
This gadget is an (undirected) clique graph with vertex set $\V'=e$.
For every $u,v\in e$, add an edge of weight $1$ between $u$ and $v$.

\emph{Star gadget}~\citep{agarwal2006higher}.
This gadget is an (undirected) star graph with vertex set $\V'=e\cup\{v_e\}$ where $v_e$ is an auxiliary vertex.
For each $v\in e$, add an edge of weight $1$ between $v$ and $v_e$.

\emph{Symmetric cardinality-based gadget}~\citep{veldt2020hypergraph}.
Similar to the Lawler gadget, this gadget is a digraph with vertex set $\V'=e\cup\{e',e''\}$.
For each $v\in e$, add a directed edge of weight $1$ from $v$ to $e'$ and a directed edge of weight $1$ from $e''$ to $v$.
Moreover, add a directed edge of weight $b\in\mathbb{N}$ from $e'$ to $e''$.

\emph{Asymmetric cardinality-based gadget}~\citep{veldt2020hypergraph}.
This gadget is a digraph defined on $\V'=e\cup\{v_e\}$.
For each $v\in e$, add a directed edge of weight $a$ from $v$ to $v_e$ and a directed edge of weight $b$ from $v_e$ to $v$.

\begin{table}
\fontsize{6}{7}\normalfont
\centering
\caption{Examples of cardinality-based and EDVWs-based splitting functions and their corresponding gadgets where $\set$ is a subset of the hyperedge $e$ and $a,b$ are positive constants.}
\label{table:sf}
\begin{tabular}{lll}
\toprule
Cardinality-based & EDVWs-based & Corresponding gadget \\
\midrule
$w_e(\set)=|\set|\cdot|e\setminus\set|$       & $w_e(\set)=\gamma_e(\set)\cdot\gamma_e(e\setminus\set)$       & Clique gadget \\
$w_e(\set)=\min\{|\set|,|e\setminus\set|\}$   & $w_e(\set)=\min\{\gamma_e(\set),\gamma_e(e\setminus\set)\}$   & Star gadget \\
$w_e(\set)=\min\{|\set|,|e\setminus\set|,b\}$ & $w_e(\set)=\min\{\gamma_e(\set),\gamma_e(e\setminus\set),b\}$ & Sym. cardinality/EDVWs-based gadget \\
$w_e(\set)=\min\{a|\set|,b|e\setminus\set|\}$ & $w_e(\set)=\min\{a\gamma_e(\set),b\gamma_e(e\setminus\set)\}$ & Asym. cardinality/EDVWs-based gadget \\
\bottomrule
\end{tabular}
\end{table}

\section{Hypergraph cuts with EDVWs}\label{s:prop}

\subsection{The hypergraph model with EDVWs}\label{ss:edvw_model}

In this paper, we consider the hypergraph model with EDVWs as defined next.

\begin{definition}[Hypergraph with EDVWs~\citep{chitra2019random}]
Let $\hg=(\V,\E,\kappa,\{\gamma_e\})$ be a hypergraph with EDVWs where $\V$ and $\E$ respectively denote the vertex set and the hyperedge set.
The function $\kappa:\E\to\R_{+}$ assigns positive weights to hyperedges and those weights reflect the strength of connection. 
Each hyperedge $e\in\E$ is associated with a function $\gamma_e:e\to\R_{+}$ to assign EDVWs.
For convenience, we define $\gamma_e(\set)=\sum_{v\in\set}\gamma_e(v)$ for any $\set\subseteq e$.
\end{definition}

The introduction of EDVWs enables the hypergraph to model the cases when the vertices in the same hyperedge contribute differently to this hyperedge.
For example, in a coauthorship network, every author (vertex) in general has a different degree of contribution to a paper (hyperedge), usually reflected by the order of the authors.
This information is lost in traditional hypergraph models but it can be easily encoded through EDVWs.
In the following, we study how to incorporate EDVWs into hypergraph cut problems.

\subsection{EDVWs-based splitting functions}\label{ss:edvw_sf}

A natural extension from cardinality-based splitting functions to EDVWs-based ones is to make the splitting penalty $w_e(\set)$ dependent only on the sum of EDVWs in $\set$. 

\begin{definition}[EDVWs-based splitting function]
We refer to splitting functions defined in the following way as EDVWs-based splitting functions:
\begin{equation}\label{e:EDVWs_w}
w_e(\set) = g_e(\gamma_e(\set)), \quad \forall\set\subseteq e,      
\end{equation}
where $g_e:[0,\gamma_e(e)]\to\R_{\geq 0}$ satisfies $g_e(0)=g_e(\gamma_e(e))=0$.
\end{definition}

For trivial EDVWs, namely $\gamma_e(v)=1$ for all $v\in e$, we have $\gamma_e(\set)=|\set|$ and the EDVWs-based splitting function reduces to a cardinality-based one.
Actually, $g_e$ can be viewed as a continuous extension of the splitting function $w_e$.
In practice, we can also incorporate the hyperedge weight $\kappa(e)$ into the splitting function such as setting $w_e(\set) = \kappa(e)\cdot g_e(\gamma_e(\set))$.
This does not influence the results presented in this paper.

We are interested in submodular splitting functions which make it possible to leverage existing solvers for submodular function minimization.
Moreover, as mentioned in Section~\ref{ss:hg2g}, submodularity is a necessary condition for a splitting function to be graph reducible~\citep{veldt2020hypergraph}.
In the following Theorem~\ref{thm:concave2subm}, we show that the EDVWs-based splitting function defined as~\eqref{e:EDVWs_w} is submodular if $g_e$ is concave.
We first present several properties of concave functions which will be used later.

\begin{lemma}[Properties of concave functions]\label{lemma:concave}
For a concave function $g$, we have
\begin{enumerate}
\item[(i)] If $b_1\leq b_2$, $a>0$, the inequality $g(b_1+a)-g(b_1)\geq g(b_2+a)-g(b_2)$ holds. 
\item[(ii)] If $b_1<b_2<b_3$, the inequality $g(b_2)\geq\frac{b_3-b_2}{b_3-b_1}g(b_1)+\frac{b_2-b_1}{b_3-b_1}g(b_3)$ holds. 
\item[(iii)] If $b_1<b_2\leq a$ and $g$ is symmetric with respect to $a$, the inequality $g(b_1)\leq g(b_2)$ holds. 
\end{enumerate}
\end{lemma}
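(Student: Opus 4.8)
The plan is to obtain all three statements directly from the defining inequality of concavity, $g(\lambda x + (1-\lambda)y) \ge \lambda g(x) + (1-\lambda)g(y)$ for $\lambda \in [0,1]$ and $x,y$ in the domain of $g$; no differentiability is needed. I would prove part (ii) first, since it is essentially a restatement of concavity and since (i) can also be recovered from it.

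For part (ii): under $b_1 < b_2 < b_3$ the point $b_2$ is the convex combination $b_2 = \tfrac{b_3-b_2}{b_3-b_1}\,b_1 + \tfrac{b_2-b_1}{b_3-b_1}\,b_3$, whose coefficients are nonnegative and sum to one, so applying concavity to this representation yields the claimed inequality at once.

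For part (i) I would first dispose of the trivial case $b_1=b_2$ and then assume $b_1<b_2$. The key observation is that both $b_1+a$ and $b_2$ lie in $[b_1,\,b_2+a]$ and are convex combinations of its endpoints with \emph{complementary} weights: setting $\lambda := \tfrac{b_2-b_1}{(b_2-b_1)+a}$, one has $b_1+a = \lambda b_1 + (1-\lambda)(b_2+a)$ and $b_2 = (1-\lambda)b_1 + \lambda(b_2+a)$. Applying concavity to each of these two representations and adding the resulting inequalities, the $g(b_1)$ and $g(b_2+a)$ terms each acquire total weight one, which gives $g(b_1+a)+g(b_2)\ge g(b_1)+g(b_2+a)$; rearranging yields the desired diminishing-returns inequality. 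One could instead chain chord-slope inequalities coming from (ii), but the direct argument has the advantage of avoiding a case split on whether the intervals $[b_1,b_1+a]$ and $[b_2,b_2+a]$ overlap.

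For part (iii) I would combine concavity with the symmetry relation $g(x)=g(2a-x)$. When $b_1<b_2\le a$ one has $b_1 < b_2 \le a < 2a-b_1$, so $b_2\in[b_1,\,2a-b_1]$ and $b_2 = \lambda b_1 + (1-\lambda)(2a-b_1)$ with $\lambda = \tfrac{2a-b_1-b_2}{2(a-b_1)}$; concavity then gives $g(b_2)\ge \lambda g(b_1) + (1-\lambda)g(2a-b_1) = g(b_1)$, the last equality by symmetry. The hypothesis $b_2\le a$ enters only to guarantee $\lambda\in[0,1]$ (indeed $\lambda\ge\tfrac12$). The only points requiring care throughout — bookkeeping rather than a genuine obstacle — are verifying that every auxiliary point stays in the domain of $g$ (in the intended application $g=g_e$ is defined on $[0,\gamma_e(e)]$ and the symmetry axis is $a=\gamma_e(e)/2$, so $2a-b_1=\gamma_e(e)-b_1$ is admissible) and that the convex-combination coefficients introduced in (i) and (iii) indeed lie in $[0,1]$; no deeper idea is needed.
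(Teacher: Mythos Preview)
Your proposal is correct and follows essentially the same approach as the paper: both derive all three properties directly from the concavity inequality $g(tx+(1-t)y)\ge tg(x)+(1-t)g(y)$, with the same choices of endpoints and weights (for (i), $x=b_1$, $y=b_2+a$ with the two complementary weights $\tfrac{b_2-b_1}{b_2-b_1+a}$ and $\tfrac{a}{b_2-b_1+a}$ and then addition; for (ii), $x=b_1$, $y=b_3$, $t=\tfrac{b_3-b_2}{b_3-b_1}$; for (iii), $x=b_1$, $y=2a-b_1$, $t=\tfrac{2a-b_1-b_2}{2(a-b_1)}$ combined with the symmetry $g(2a-b_1)=g(b_1)$). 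The only cosmetic differences are that you prove (ii) before (i) and separate out the trivial case $b_1=b_2$ in (i), whereas the paper's choice of $t$ handles that case automatically.
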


\begin{proof}
According to the definition of concave functions, the following inequality holds for any $x$ and $y$ in the domain of a concave function $g$,
\begin{equation*}
g(tx+(1-t)y)\geq tg(x)+(1-t)g(y), \quad\forall t\in[0,1].	
\end{equation*}
To prove Property (i), we set $x=b_1$ and $y=b_2+a$. 
For $t=\frac{b_2-b_1}{b_2-b_1+a}$ and $t=\frac{a}{b_2-b_1+a}$, the above inequality respectively becomes
\begin{align*}
g(b_1+a) &\geq \textstyle \frac{b_2-b_1}{b_2-b_1+a}g(b_1) + \frac{a}{b_2-b_1+a}g(b_2+a), \\
g(b_2) &\geq \textstyle \frac{a}{b_2-b_1+a}g(b_1) + \frac{b_2-b_1}{b_2-b_1+a}g(b_2+a).
\end{align*}
Property (i) can be obtained by respectively adding both sides of these two inequalities together.
Property (ii) can be proved by setting $x=b_1$, $y=b_3$ and $t=\frac{b_3-b_2}{b_3-b_1}$.
Property (iii) can be proved by setting $x=b_1$, $y=2a-b_1$ and $t=\frac{2a-b_1-b_2}{2a-2b_1}$.
\end{proof}

\begin{theorem}\label{thm:concave2subm}
For EDVWs-based splitting functions defined as~\eqref{e:EDVWs_w}, if $g_e$ is a concave function, then $w_e$ is submodular.
If $g_e$ is concave as well as symmetric with respect to $\gamma_e(e)/2$, then $w_e$ is submodular and symmetric.
\end{theorem}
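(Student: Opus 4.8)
The plan is to verify the submodularity inequality~\eqref{e:subf} directly for $w_e(\set) = g_e(\gamma_e(\set))$ using Property (i) of Lemma~\ref{lemma:concave}, and then handle symmetry as an easy separate observation. So first I would fix $\set_1 \subseteq \set_2 \subset e$ and a vertex $v \in e \setminus \set_2$, and set $a = \gamma_e(v) > 0$, $b_1 = \gamma_e(\set_1)$, $b_2 = \gamma_e(\set_2)$. Since EDVWs are positive and $\set_1 \subseteq \set_2$, we have $b_1 \leq b_2$, and by additivity of $\gamma_e$ over disjoint unions, $\gamma_e(\set_i \cup \{v\}) = b_i + a$. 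Thus the marginal gains become $w_e(\set_1 \cup \{v\}) - w_e(\set_1) = g_e(b_1 + a) - g_e(b_1)$ and similarly for $\set_2$, and the required inequality $g_e(b_1+a) - g_e(b_1) \geq g_e(b_2+a) - g_e(b_2)$ is exactly Lemma~\ref{lemma:concave}(i). One technical point I would mention: one must check that all arguments $b_1, b_2, b_1+a, b_2+a$ lie in the domain $[0, \gamma_e(e)]$ on which $g_e$ is defined — this holds because $\set_2 \cup \{v\} \subseteq e$ forces $b_2 + a = \gamma_e(\set_2 \cup \{v\}) \leq \gamma_e(e)$, and the others are even smaller.

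For the second claim, I would first note symmetry of $w_e$: if $g_e$ is symmetric about $\gamma_e(e)/2$, then for any $\set \subseteq e$, $\gamma_e(e \setminus \set) = \gamma_e(e) - \gamma_e(\set)$, so $w_e(e \setminus \set) = g_e(\gamma_e(e) - \gamma_e(\set)) = g_e(\gamma_e(\set)) = w_e(\set)$, using that $t \mapsto g_e(\gamma_e(e) - t)$ equals $g_e(t)$ by the symmetry assumption. Submodularity still follows from the first part since concavity is assumed. Hence $w_e$ is both submodular and symmetric.

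I do not expect a serious obstacle here; the statement is essentially a translation of the discrete submodular inequality into a statement about finite differences of $g_e$, which concavity handles cleanly. The only place requiring a little care is the bookkeeping that the relevant evaluation points stay inside the domain of $g_e$ and that $\gamma_e$ behaves additively — both are immediate from the setup, but worth stating explicitly so the appeal to Lemma~\ref{lemma:concave}(i) is fully justified. If anything, I would double-check the edge case $a = 0$, but this cannot occur since $\gamma_e$ takes values in $\R_+$ (strictly positive), so $a = \gamma_e(v) > 0$ always, matching the hypothesis of Lemma~\ref{lemma:concave}(i).
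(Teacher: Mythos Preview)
Your proposal is correct and follows essentially the same route as the paper: both set $b_1=\gamma_e(\set_1)$, $b_2=\gamma_e(\set_2)$, $a=\gamma_e(v)$ and invoke Lemma~\ref{lemma:concave}(i) to obtain the diminishing-returns inequality, then note that symmetry of $g_e$ about $\gamma_e(e)/2$ immediately gives $w_e(\set)=w_e(e\setminus\set)$. Your extra remarks on the domain containment and the strict positivity of $\gamma_e(v)$ are sound and only add rigor beyond what the paper spells out.
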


\begin{proof}
For every $\set_1\subseteq\set_2\subset e$ and every $v\in e\setminus\set_2$, set $b_1=\gamma_e(\set_1)$, $b_2=\gamma_e(\set_2)$ and $a=\gamma_e(v)$.
When $g_e$ is concave, it follows from (i) in Lemma~\ref{lemma:concave} that
\begin{equation*}
g_e(\gamma_e(\set_1)+\gamma_e(v)) - g_e(\gamma_e(\set_1)) \geq 
g_e(\gamma_e(\set_2)+\gamma_e(v)) - g_e(\gamma_e(\set_2)).
\end{equation*}
It immediately follows from~\eqref{e:EDVWs_w} that 
\begin{equation*}
w_e(\set_1\cup\{v\}) - w_e(\set_1) \geq w_e(\set_2\cup\{v\}) - w_e(\set_2).
\end{equation*}
Hence, $w_e$ is submodular according to the definition of submodular functions. 
Moreover, it is straightforward to show that $w_e$ satisfies $w_e(\set)=w_e(e\setminus\set)$ for all $\set\subseteq e$ if $g_e$ is symmetric with respect to $\gamma_e(e)/2$.
\end{proof}

\subsection{Graph reducibility of EDVWs-based splitting functions}\label{ss:edvw_hg2g}

We first consider the following concave functions for $g_e$.
\begin{align}
g_e(x) &= x\cdot(\gamma_e(e)-x),                    \label{e:concave1} \\
g_e(x) &= \min\{x, \gamma_e(e)-x\},                 \label{e:concave2} \\
g_e(x) &= \min\{x, \gamma_e(e)-x, b\}, \,\, b>0,    \label{e:concave3} \\
g_e(x) &= \min\{ax, b(\gamma_e(e)-x)\}, \,\, a,b>0. \label{e:concave4}
\end{align}
By substituting these concave functions into~\eqref{e:EDVWs_w}, we obtain the EDVWs-based splitting functions listed in Table~\ref{table:sf}, which are submodular according to Theorem~\ref{thm:concave2subm}. 
The first three are also symmetric while the last one is generally asymmetric unless $a=b$.
For the third one, if the parameter $b$ is set to some value no greater than $\min_{v\in e}\gamma_e(v)$, it reduces to the all-or-nothing splitting function; while if $b\geq\max_{\set\subseteq e}\min\{\gamma_e(\set), \gamma_e(e)-\gamma_e(\set)\}$, it reduces to the second one.
We are going to show that the EDVWs-based splitting functions listed in Table~\ref{table:sf} are also graph reducible by constructing gadgets that generalize the gadgets introduced in Section~\ref{ss:hg2g}.
An illustration of these generalized gadgets is given in Figure~\ref{fig:gadget}.

\begin{table}
\fontsize{6}{7}\normalfont
\centering
\caption{Examples of gadgets $\G_e=(e\cup\hat{\V}, \E')$ incorporating EDVWs.}
\label{table:gadget}
\begin{tabular}{p{3.5cm}p{1.2cm}p{0.8cm}p{1.4cm}p{4.1cm}}
\toprule
Gadget name & Type & $\hat{\V}$ & $|\E'|$ & Description \\
\midrule
(EDVWs-based) Clique gadget & Undirected & $\emptyset$ & $|e|(|e|-1)/2$ & For every $u,v\in e$, add an edge $(u,v)$ of weight $\gamma_e(u)\gamma_e(v)$. \vspace{0.3em} \\ 
(EDVWs-based) Star gadget & Undirected & $\{v_e\}$ & $|e|$ & For every $v\in e$, add an edge $(v,v_e)$ of weight $\gamma_e(v)$. \vspace{0.3em} \\ 
Sym. EDVWs-based gadget & Directed & $\{e',e''\}$ & $2|e|+1$ & For every $v\in e$, add a directed edge of weight $\gamma_e(v)$ from $v$ to $e'$ and a directed edge of weight $\gamma_e(v)$ from $e''$ to $v$. Moreover, add a directed edge of weight $b$ from $e'$ to $e''$. \vspace{0.3em} \\
Asym. EDVWs-based gadget & Directed & $\{v_e\}$ & $2|e|$ & For every $v\in e$, add a directed edge of weight $a\gamma_e(v)$ from $v$ to $v_e$ and a directed edge of weight $b\gamma_e(v)$ from $v_e$ to $v$.\\
\bottomrule
\end{tabular}
\end{table}
 
\begin{figure*}[t!]
\centering
\includegraphics[scale=1]{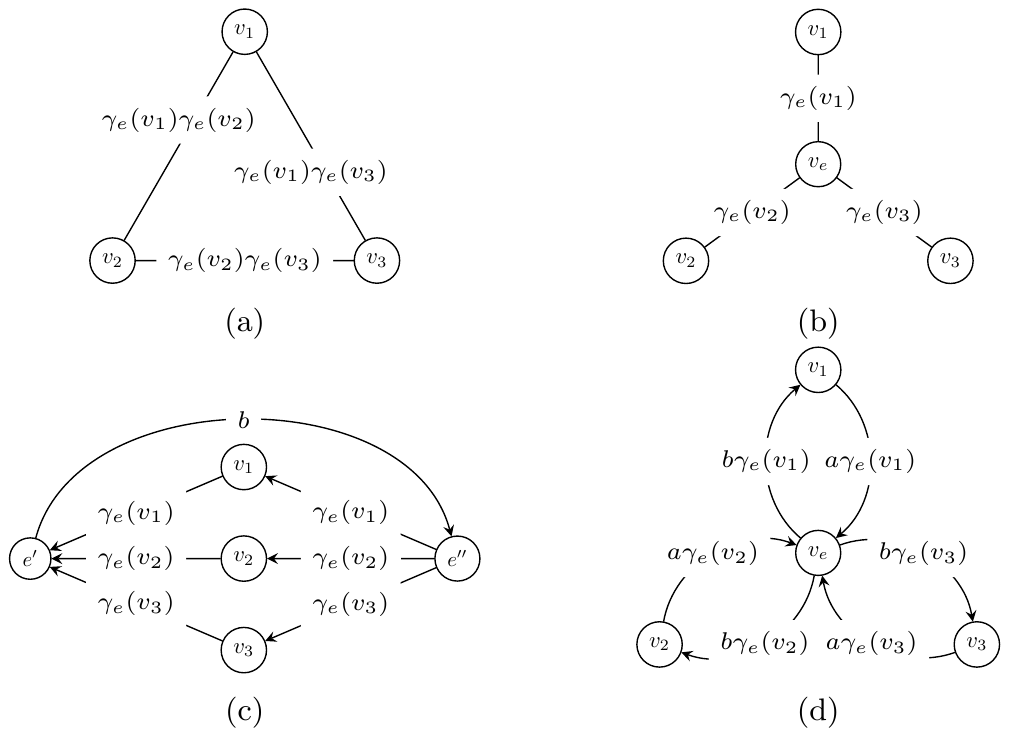}
\captionsetup{margin=0.3cm}
\caption{The illustration of (a) a clique gadget, (b) a star gadget, (c) a symmetric EDVWs-based gadget, and (d) an asymmetric EDVWs-based gadget, for a hyperedge $e=\{v_1,v_2,v_3\}$.}
\label{fig:gadget}
\end{figure*}
 
\begin{theorem}
The EDVWs-based splitting functions listed in Table~\ref{table:sf} are all graph reducible and respectively correspond to the gadgets described in Table~\ref{table:gadget}.
\end{theorem}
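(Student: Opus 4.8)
The plan is to verify, for each of the four gadgets in Table~\ref{table:gadget}, that the associated gadget splitting function $\hat{w}_e$ defined by~\eqref{e:gadget_w} coincides with the corresponding EDVWs-based splitting function $w_e$ from Table~\ref{table:sf}. For a fixed $\set\subseteq e$ this requires computing $\min_{\T\subseteq\V',\,\T\cap e=\set}\cut_{\G_e}(\T)$, i.e., placing the vertices of $e$ on the two sides of the cut according to $\set$ and $e\setminus\set$, and then choosing the optimal side for each auxiliary vertex. Since each gadget has at most two auxiliary vertices, this optimization is finite and elementary; the work is just bookkeeping the edge weights crossing the cut.

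First I would handle the two undirected gadgets, which have no nontrivial auxiliary optimization. For the \emph{clique gadget} ($\hat{\V}=\emptyset$), the only cut consistent with $\set$ is $\T=\set$, and the crossing edges are exactly the pairs $(u,v)$ with $u\in\set$, $v\in e\setminus\set$, each of weight $\gamma_e(u)\gamma_e(v)$; summing gives $\big(\sum_{u\in\set}\gamma_e(u)\big)\big(\sum_{v\in e\setminus\set}\gamma_e(v)\big)=\gamma_e(\set)\gamma_e(e\setminus\set)$, matching~\eqref{e:concave1}. For the \emph{star gadget}, the auxiliary vertex $v_e$ goes on one of the two sides; if placed with $\set$ the cut weight is $\sum_{v\in e\setminus\set}\gamma_e(v)=\gamma_e(e\setminus\set)$, if placed with $e\setminus\set$ it is $\gamma_e(\set)$, so the minimum is $\min\{\gamma_e(\set),\gamma_e(e\setminus\set)\}$, matching~\eqref{e:concave2}.

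Next I would treat the two directed gadgets, where the direction of the infinite- or finite-weight auxiliary edges matters. For the \emph{symmetric EDVWs-based gadget}, note that $e'$ receives edges from all $v\in e$ and sends the single edge of weight $b$ to $e''$, while $e''$ sends edges to all $v\in e$; with $s$-side $=\set$, placing $e'$ on the $\set$-side and $e''$ on the $(e\setminus\set)$-side contributes $\gamma_e(e\setminus\set)$ (edges $v\to e'$ for $v\notin\set$ are not counted since they go from the $t$-side to... — here one must be careful about directedness) — more precisely one enumerates the four placements of $(e',e'')$, computes the directed cut weight $\sum_{u\in\T,v\notin\T}W_{uv}$ in each case, and checks the minimum equals $\min\{\gamma_e(\set),\gamma_e(e\setminus\set),b\}$, matching~\eqref{e:concave3}. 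The \emph{asymmetric} gadget has a single auxiliary $v_e$ with out-edges $v\to v_e$ of weight $a\gamma_e(v)$ and in-edges $v_e\to v$ of weight $b\gamma_e(v)$; placing $v_e$ on the $\set$-side gives crossing weight $a\,\gamma_e(e\setminus\set)$ (wait — the $a$-weighted edges from $e\setminus\set$ into $v_e$)… again one checks the two placements yield $a\gamma_e(e\setminus\set)$ and $b\gamma_e(\set)$, so the minimum is $\min\{a\gamma_e(\set),b\gamma_e(e\setminus\set)\}$ up to the labeling convention, matching~\eqref{e:concave4}. Finally one checks $\hat{w}_e(\emptyset)=\hat{w}_e(e)=0$ in each case, which is immediate since then all of $\V'$ can be pushed to one side.

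The main obstacle is not conceptual but notational: getting the orientation conventions right in the directed gadgets so that the correct subset of weighted edges is counted as crossing, and in particular making sure the infinite-weight edges (in the Lawler-style constructions) never cross, which forces the vertices of $e$ to sit on their prescribed sides and is what pins down $\T\cap e=\set$. I would state one clean lemma computing the directed cut of a gadget as a function of the placement of its $O(1)$ auxiliary vertices, apply it four times, and invoke Theorem~\ref{thm:concave2subm} only implicitly (the gadget splitting functions are automatically submodular, consistent with the concavity of the $g_e$ in~\eqref{e:concave1}–\eqref{e:concave4}).
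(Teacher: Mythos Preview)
Your approach is exactly the paper's: for each gadget, enumerate the finitely many placements of the auxiliary vertices consistent with $\T\cap e=\set$, compute $\cut_{\G_e}(\T)$ for each, and take the minimum. The clique and star cases you handle correctly and identically to the paper.

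Your directed computations, however, are miscalculated. Recall that the directed cut~\eqref{e:g_cut} counts only edges going \emph{from} $\T$ \emph{to} $\V'\setminus\T$. In the asymmetric gadget with $\T=\set$ (so $v_e\notin\T$), the crossing edges are $v\to v_e$ for $v\in\set$, giving $a\gamma_e(\set)$; with $\T=\set\cup\{v_e\}$, the crossing edges are $v_e\to v$ for $v\in e\setminus\set$, giving $b\gamma_e(e\setminus\set)$. Your values $a\gamma_e(e\setminus\set)$ and $b\gamma_e(\set)$ are swapped because you counted edges entering $\T$ rather than leaving it; the hedge ``up to labeling convention'' does not repair this, since $a$ and $b$ are fixed parameters of the gadget. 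For the symmetric gadget you should actually carry out the four cases: $\T=\set$, $\set\cup\{e'\}$, $\set\cup\{e''\}$, $\set\cup\{e',e''\}$ give $\gamma_e(\set)$, $b$, $\gamma_e(e)$, $\gamma_e(e\setminus\set)$ respectively, and the third is dominated by the first two.

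Finally, drop the remarks about infinite-weight edges ``forcing'' $\T\cap e=\set$. None of the four gadgets in Table~\ref{table:gadget} carries infinite weights (that is the Lawler gadget, which is not one of the four here), and the constraint $\T\cap e=\set$ is part of the definition~\eqref{e:gadget_w}, not something enforced by edge weights.
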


\begin{proof}
In the following, we prove these four cases one by one. 

(i) It follows from~\eqref{e:gadget_w} that the splitting function constructed from the clique gadget is
\begin{equation*}
\hat{w}_e(\set) = \cut_{\G_e}(\set) = \sum_{u\in\set, v\in e\setminus\set}\gamma_e(u)\gamma_e(v)=\sum_{u\in\set}\gamma_e(u)\cdot\sum_{v\in e\setminus\set}\gamma_e(v)=\gamma_e(\set)\cdot\gamma_e(e\setminus\set).
\end{equation*}

(ii) According to~\eqref{e:gadget_w}, there are two ways to split $\set$ from $e$ in the star gadget: set $\T=\set$ or $\T=\set\cup\{v_e\}$. The corresponding graph cut weights $\cut_{\G_e}(\T)$ are respectively equal to $\gamma_e(\set)$ and $\gamma_e(e\setminus\set)$. Take the minimum one and the result follows. 

(iii) For the symmetric EDVWs-based gadget, there are four ways to split $\set$ from $e$: set $\T=\set$, $\T=\set\cup\{e',e''\}$, $\T=\set\cup\{e'\}$ or $\T=\set\cup\{e''\}$. They respectively result in $\cut_{\G_e}(\T)$ equal to $\gamma_e(\set)$, $\gamma_e(e\setminus\set)$, $b$ and $\gamma_e(e)$. Notice that the last one is always no less than the first two. The result follows by taking the minimum one. 

(iv) To split $\set$ from $e$ in the asymmetric EDVWs-based gadget, we can set $\T=\set$ or $\T=\set\cup\{v_e\}$ which respectively lead to $\cut_{\G_e}(\T)$ equal to $a\gamma_e(\set)$ and $b\gamma_e(e\setminus\set)$. The proof is completed by taking the minimum one.
\end{proof}

It has been proved in~\citep{veldt2020hypergraph} that all submodular cardinality-based splitting functions are graph reducible. 
More precisely, any submodular cardinality-based splitting function can be constructed from a combination of up to $|e|-1$ different asymmetric cardinality-based gadgets.
For a symmetric submodular cardinality-based splitting function, it can also be constructed from a combination of up to $\lfloor |e|/2 \rfloor$ symmetric cardinality-based gadgets.
In Theorem~\ref{thm:hg2g} below, we show graph reducibility for the proposed submodular EDVWs-based splitting functions.
To this end, we first define two sets as follows.
\begin{align}
& \mathcal{Q}_s=\{\gamma_e(\set)|\set\subseteq e, 0<\gamma_e(\set)\leq\gamma_e(e)/2\}, \\
& \mathcal{Q}_a=\{\gamma_e(\set)|\emptyset\subset\set\subset e\}.
\end{align}
Denote by $\set_i$ the subset of $e$ corresponding to the $i$th smallest element in $\mathcal{Q}_a$.
The set $\mathcal{Q}_s$ is a subset of $\mathcal{Q}_a$ and contains the smallest $|\mathcal{Q}_s|$ elements in $\mathcal{Q}_a$.

\begin{theorem}\label{thm:hg2g}
All EDVWs-based splitting functions defined as~\eqref{e:EDVWs_w} with concave $g_e$ are graph reducible.
A hyperedge paired with such a splitting function can be reduced to a graph which is a combination of at most $|\mathcal{Q}_a|$ asymmetric EDVWs-based gadgets.
If, in addition, $g_e$ is symmetric, the hyperedge can also be reduced to a graph combining at most $|\mathcal{Q}_s|$ symmetric EDVWs-based gadgets.
\end{theorem}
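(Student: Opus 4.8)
The plan is to mimic the strategy used in~\citep{veldt2020hypergraph} for cardinality-based splitting functions, but working in the ``EDVWs coordinate'' $x=\gamma_e(\set)$ rather than the cardinality $|\set|$. The key observation is that the asymmetric EDVWs-based gadget with parameters $a,b$ realizes the splitting function $\set\mapsto\min\{a\,\gamma_e(\set),\,b\,\gamma_e(e\setminus\set)\}$, which as a function of $x=\gamma_e(\set)$ is the ``tent'' function $x\mapsto\min\{ax,\,b(\gamma_e(e)-x)\}$: this is piecewise linear, increasing then decreasing, with its peak at $x^*=b\,\gamma_e(e)/(a+b)$. Summing several such gadgets (with nonnegative weights) produces a concave piecewise-linear function of $x$ with breakpoints we can choose. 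So the core of the argument is: given a concave $g_e$ on $[0,\gamma_e(e)]$, build a concave piecewise-linear function $\tilde g_e$ that \emph{agrees with $g_e$ at every point of $\mathcal{Q}_a$} (the only points where the splitting function is ever evaluated, since $\gamma_e(\set)$ always lands in $\mathcal{Q}_a$ for $\emptyset\subsetneq\set\subsetneq e$), and then decompose $\tilde g_e$ into a nonnegative combination of tent functions, one per gadget.

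Concretely, I would first list the elements of $\mathcal{Q}_a$ as $0=q_0<q_1<\dots<q_m<q_{m+1}=\gamma_e(e)$ where $m=|\mathcal{Q}_a|$ (with $q_0,q_{m+1}$ adjoined for convenience), and let $\tilde g_e$ be the function that is linear on each $[q_{i},q_{i+1}]$ and satisfies $\tilde g_e(q_i)=g_e(q_i)$. Since $g_e$ is concave and $g_e(0)=g_e(\gamma_e(e))=0$, the secant-slope characterization (essentially Lemma~\ref{lemma:concave}(ii)) shows the successive slopes $s_i=(\tilde g_e(q_{i+1})-\tilde g_e(q_i))/(q_{i+1}-q_i)$ are nonincreasing in $i$, so $\tilde g_e$ is itself concave, piecewise linear, nonnegative, and vanishes at the endpoints. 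Next I would write $\tilde g_e$ as a telescoping sum of elementary tent functions supported on the breakpoints: a standard fact is that any concave piecewise-linear ``hat'' on $[0,L]$ vanishing at both ends equals $\sum_i c_i\,T_{q_i}$ where $T_{q}(x)=\min\{x/q,\,(L-x)/(L-q)\}$ is the unit tent peaking at $q$, and the coefficients $c_i\ge 0$ are determined by the second differences of the slopes; one can verify this by induction on the number of breakpoints, peeling off the tent anchored at the leftmost interior breakpoint with largest relevant weight, or simply by solving the triangular linear system for the $c_i$ and checking nonnegativity from $s_{i-1}\ge s_i$. Each $T_{q_i}$ is exactly the splitting function of an asymmetric EDVWs-based gadget (choose $a,b$ with $b/(a+b)=q_i/L$, scaled by $c_i$), so concatenating these $m$ gadgets on the shared vertex set $e$ gives a gadget whose splitting function is $\sum_i c_i T_{q_i}(\gamma_e(\set))=\tilde g_e(\gamma_e(\set))=g_e(\gamma_e(\set))=w_e(\set)$ on all relevant $\set$, proving graph reducibility with at most $|\mathcal{Q}_a|$ asymmetric gadgets.

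For the symmetric case I would do the same construction but exploit that $g_e$ is symmetric about $\gamma_e(e)/2$, so $\tilde g_e$ can be taken symmetric too and is determined by its values on $\mathcal{Q}_s=\mathcal{Q}_a\cap(0,\gamma_e(e)/2]$. The symmetric EDVWs-based gadget with parameter $b$ realizes $\set\mapsto\min\{\gamma_e(\set),\gamma_e(e\setminus\set),b\}$, i.e.\ as a function of $x$ the symmetric trapezoid $\min\{x,L-x,b\}$; nonnegative combinations of these (over various thresholds $b$) generate exactly the symmetric concave piecewise-linear hats. Pairing up the breakpoints $q_i$ and $L-q_i$, each symmetric trapezoid absorbs one such pair, so $|\mathcal{Q}_s|$ gadgets suffice; the coefficient nonnegativity again follows from the monotonicity of slopes of a concave function, now combined with Lemma~\ref{lemma:concave}(iii) to control the behaviour on $[0,L/2]$.

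The main obstacle I anticipate is the bookkeeping in the decomposition step: showing that the coefficients $c_i$ in $\tilde g_e=\sum c_i T_{q_i}$ are genuinely nonnegative, and that the gadget weights (the $a,b$ values, which must stay positive and in the case of the symmetric gadget have $b$ a valid threshold) are consistent with the edge-weight conventions of Table~\ref{table:gadget}. This is a finite, elementary computation — it reduces to the discrete concavity inequality $s_{i-1}\ge s_i$ on consecutive secant slopes of $g_e$, which is immediate from Lemma~\ref{lemma:concave}(ii) — but it requires care to set up the indexing and the boundary tents (the two ``half-tents'' at $q_1$ and $q_m$, which correspond to the degenerate asymmetric gadgets with only edges into, or only edges out of, the auxiliary vertex) correctly. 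A secondary point worth checking explicitly is that $\gamma_e(\set)$ ranges over precisely $\mathcal{Q}_a$ as $\set$ ranges over proper nonempty subsets, so that matching $\tilde g_e$ with $g_e$ only on $\mathcal{Q}_a$ is in fact enough to make the gadget splitting function equal $w_e$ everywhere it matters.
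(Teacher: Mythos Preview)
Your proposal is correct and follows essentially the same strategy as the paper: both place the gadget breakpoints at the values in $\mathcal{Q}_a$ (resp.\ $\mathcal{Q}_s$) and express the target as a nonnegative combination of tent (resp.\ trapezoid) functions, with nonnegativity of the coefficients coming from the concavity of $g_e$ --- the paper simply carries this out by explicitly writing down and inverting the coefficient matrix rather than phrasing it in terms of second differences of secant slopes. Your aside about boundary ``half-tents'' is unnecessary: since $g_e(0)=g_e(\gamma_e(e))=0$, the full tents $T_{q_1},\dots,T_{q_m}$ already span the piecewise-linear interpolant, and no degenerate gadgets are needed.
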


\begin{proof} 
We first consider the case when $g_e$ is concave and symmetric, then study the more general case when $g_e$ is concave and possibly asymmetric.

(i) Consider a hyperedge $e$ and an EDVWs-based splitting function $w_e$ with concave, symmetric $g_e$.
We are going to show that $w_e$ corresponds to some gadget splitting function $\hat{w}_e$ and one way for constructing such a gadget is to combine $r=|\mathcal{Q}_s|$ symmetric EDVWs-based gadgets.
For the $i$th symmetric EDVWs-based gadget, denote the two auxiliary vertices by $e'_i$ and $e''_i$, set the weight of the edge from $e'_i$ to $e''_i$ to $b_i=\gamma_e(\set_i)$, then scale all edge weights by a factor $a_i\geq 0$.
The combined gadget contains $|e|+2|\mathcal{Q}_s|$ vertices and $(2|e|+1)|\mathcal{Q}_s|$ edges.
Its corresponding splitting function can be written as
\begin{equation}\label{e:sym_combined}
\textstyle \hat{w}_e(\set)=\sum_{i=1}^r a_i\cdot\min\{\gamma_e(\set), \gamma_e(e\setminus\set), b_i\}.
\end{equation}
By substituting $\set_i$ into~\eqref{e:sym_combined} we get 
\begin{align*}
\hat{w}_e(\set_i) 
&= \textstyle \sum_{j=1}^r a_j\cdot\min\{b_i, \gamma_e(e)-b_i, b_j\} \\
&= \textstyle \sum_{j=1}^i a_j b_j + \sum_{j=i+1}^r a_j b_i, \quad\forall i=1,\cdots,r,
\end{align*}
which can be condensed in the following matrix form
\begin{equation}\label{e:sym_combined_mat}
\underbrace{\left[\begin{matrix}
b_1 & b_1 & b_1 & b_1 & \cdots & b_1 \\
b_1 & b_2 & b_2 & b_2 & \cdots & b_2 \\
b_1 & b_2 & b_3 & b_3 & \cdots & b_3 \\
b_1 & b_2 & b_3 & b_4 & \cdots & b_4 \\
\vdots & \vdots & \vdots &\vdots & \ddots & \vdots \\
b_1 & b_2 & b_3 & b_4 & \cdots & b_r 
\end{matrix}\right]}_{\mathbf{B}_1}
\left[\begin{matrix}
a_1 \\
a_2 \\
a_3 \\
a_4 \\
\vdots \\
a_r	
\end{matrix}\right] = 
\left[\begin{matrix}
\hat{w}_e(\set_1) \\
\hat{w}_e(\set_2) \\
\hat{w}_e(\set_3) \\
\hat{w}_e(\set_4) \\
\vdots \\
\hat{w}_e(\set_r)
\end{matrix}\right].
\end{equation}
The question left is whether there exist non-negative $a_1,\cdots,a_r$ such that $\hat{w}_e(\set_i)=w_e(\set_i)$ for all $i\in[r]$.
To identify such $a_i$, we replace $\hat{w}_e(\set_i)$ with $w_e(\set_i)$ and invert the system~\eqref{e:sym_combined_mat} as follows
\begin{equation*}
\left[\begin{matrix}
a_1     \\
a_2     \\
\vdots  \\
a_{r-1} \\
a_r	
\end{matrix}\right] = \mathbf{B}_1^{-1}
\left[\begin{matrix}
w_e(\set_1)     \\
w_e(\set_2)     \\
\vdots          \\
w_e(\set_{r-1}) \\
w_e(\set_r)
\end{matrix}\right],
\end{equation*}
where 
\begin{equation*}
\mathbf{B}_1^{-1} = 
\left[\begin{smallmatrix}
\frac{b_2}{(b_2-b_1)b_1} & -\frac{1}{b_2-b_1} & 0 & \cdots & 0 & 0 & 0 \\
-\frac{1}{b_2-b_1} & \frac{b_3-b_1}{(b_3-b_2)(b_2-b_1)} & -\frac{1}{b_3-b_2} & \cdots & 0 & 0 & 0 \\
\vdots & \vdots & \vdots & \ddots & \vdots & \vdots & \vdots\\
0 & 0 & 0 & \cdots & -\frac{1}{b_{r-1}-b_{r-2}} & \frac{b_r-b_{r-2}}{(b_r-b_{r-1})(b_{r-1}-b_{r-2})} & -\frac{1}{b_r-b_{r-1}} \\
0 & 0 & 0 & \cdots & 0 & -\frac{1}{b_r-b_{r-1}} & \frac{1}{b_r-b_{r-1}}
\end{smallmatrix}\right]. 
\end{equation*}
It follows that 
\begin{equation*}
\left[\begin{matrix}
a_1     \\
a_2     \\
\vdots  \\
a_{r-1} \\
a_r	
\end{matrix}\right] = 
\left[\begin{matrix}
\frac{b_2}{(b_2-b_1)b_1} w_e(\set_1) - \frac{1}{b_2-b_1} w_e(\set_2) \\
\frac{b_3-b_1}{(b_3-b_2)(b_2-b_1)} w_e(\set_2) - \frac{1}{b_2-b_1} w_e(\set_1) - \frac{1}{b_3-b_2} w_e(\set_3) \\
\vdots \\
\frac{b_r-b_{r-2}}{(b_r-b_{r-1})(b_{r-1}-b_{r-2})} w_e(\set_{r-1}) - \frac{1}{b_{r-1}-b_{r-2}} w_e(\set_{r-2}) - \frac{1}{b_r-b_{r-1}} w_e(\set_r) \\
\frac{1}{b_r-b_{r-1}} w_e(\set_r) - \frac{1}{b_r-b_{r-1}} w_e(\set_{r-1})
\end{matrix}\right].
\end{equation*}
Since $a_1,\cdots,a_r$ need to be non-negative, we are left to prove the following inequalities:
\begin{align}
w_e(\set_1) &\textstyle \geq \frac{b_1}{b_2} w_e(\set_2), \label{e:ineq_1} \\
w_e(\set_i) &\textstyle \geq \frac{b_{i+1}-b_i}{b_{i+1}-b_{i-1}} w_e(\set_{i-1}) + \frac{b_i-b_{i-1}}{b_{i+1}-b_{i-1}} w_e(\set_{i+1}), \quad\forall i=2,\cdots,r-1, \label{e:ineq_2} \\
w_e(\set_r) &\geq w_e(\set_{r-1}). \label{e:ineq_3}
\end{align}
Moreover, it can be observed from~\eqref{e:sym_combined_mat} that $\hat{w}_e(\set_1)\leq\hat{w}_e(\set_2)\leq\cdots\leq\hat{w}_e(\set_r)$ due to the structure of $\mathbf{B}_1$ and the non-negativity of coefficients $a_i$. Hence, we also need to show that
\begin{equation}\label{e:ineq_4}
w_e(\set_1)\leq w_e(\set_2)\leq\cdots\leq w_e(\set_r). 
\end{equation}
By introducing $b_0=0$, the inequalities~\eqref{e:ineq_1},~\eqref{e:ineq_2} can be rewritten as
\begin{equation}\label{e:ineq_5}
g_e(b_i) \textstyle\geq \frac{b_{i+1}-b_i}{b_{i+1}-b_{i-1}} g_e(b_{i-1}) + \frac{b_i-b_{i-1}}{b_{i+1}-b_{i-1}} g_e(b_{i+1}), \quad\forall i=1,\cdots,r-1.
\end{equation}
It follows immediately from (ii) in Lemma~\ref{lemma:concave}. 
The inequalities~\eqref{e:ineq_4} (including~\eqref{e:ineq_3}) can be rewritten as 
\begin{equation}\label{e:ineq_6}
g_e(b_1)\leq g_e(b_2)\leq\cdots\leq g_e(b_r),
\end{equation}
which can be proved according to (iii) in Lemma~\ref{lemma:concave}.
Notice that, when there exists any equality in \eqref{e:ineq_1}-\eqref{e:ineq_3}, the corresponding $a_i$ equals $0$, which implies that the number of symmetric EDVWs-based gadgets needed to construct the combined gadget can be further reduced.
The equality in~\eqref{e:ineq_2} (or see~\eqref{e:ineq_5}) means that the points at $b_{i-1}$, $b_i$ and $b_{i+1}$ are colinear.

(ii) Next we consider the case when $g_e$ is concave and possibly asymmetric. 
In this case, $w_e$ can be shown to be identical to some gadget splitting function $\hat{w}_e$ and such a gadget can be constructed by combing $r=|\mathcal{Q}_a|$ asymmetric EDVWs-based gadgets.
The $i$th one is paired with parameters $a_i (\gamma_e(e)-b_i)$ and $a_i b_i$ where $a_i\geq 0$ is a scaling parameter.
The combined gadget consists of $|e|+|\mathcal{Q}_a|$ vertices and $2|e|\cdot|\mathcal{Q}_a|$ edges.
Its corresponding splitting function can be written as
\begin{equation}\label{e:asym_combined}
\textstyle \hat{w}_e(\set)=\sum_{i=1}^r a_i\cdot\min\{(\gamma_e(e)-b_i)\gamma_e(\set), b_i\gamma_e(e\setminus\set)\}.
\end{equation}
We set $b_i=\gamma_e(\set_i)$ for all $i\in[r]$.
Then it holds that $b_i+b_{r+1-i}=\gamma_e(e)$.
By substituting $\set_i$ into~\eqref{e:asym_combined} we get
\begin{align*}
\hat{w}_e(\set_i) 
&= \textstyle \sum_{j=1}^r a_j\cdot\min\{(\gamma_e(e)-b_j)b_i, b_j(\gamma_e(e)-b_i)\},   \\
&= \textstyle \sum_{j=1}^i a_jb_jb_{r+1-i} + \sum_{j=i+1}^r a_jb_{r+1-j}b_i, \quad\forall i=1,\cdots,r,
\end{align*}
which can also be written in the following matrix from
\begin{equation}\label{e:asym_combined_mat}
\underbrace{\left[\begin{matrix}
b_1b_r     & b_1b_{r-1} & b_1b_{r-2} & b_1b_{r-3} & \cdots & b_1b_2     & b_1b_1    \\
b_1b_{r-1} & b_2b_{r-1} & b_2b_{r-2} & b_2b_{r-3} & \cdots & b_2b_2     & b_2b_1    \\
b_1b_{r-2} & b_2b_{r-2} & b_3b_{r-2} & b_3b_{r-3} & \cdots & b_3b_2     & b_3b_1    \\
b_1b_{r-3} & b_2b_{r-3} & b_3b_{r-3} & b_4b_{r-3} & \cdots & b_4b_2     & b_4b_1    \\
\vdots     & \vdots     & \vdots     & \vdots     & \ddots & \vdots     & \vdots    \\
b_1b_2     & b_2b_2     & b_3b_2     & b_4b_2     & \cdots & b_{r-1}b_2 & b_{r-1}b_1 \\
b_1b_1     & b_2b_1     & b_3b_1     & b_4b_1     & \cdots & b_{r-1}b_1 & b_rb_1     \\
\end{matrix}\right]}_{\mathbf{B}_2}
\left[\begin{matrix}
a_1 \\
a_2 \\
a_3 \\
a_4 \\
\vdots \\
a_{r-1} \\
a_r
\end{matrix}\right] = 
\left[\begin{matrix}
\hat{w}_e(\set_1) \\
\hat{w}_e(\set_2) \\
\hat{w}_e(\set_3) \\
\hat{w}_e(\set_4) \\
\vdots \\
\hat{w}_e(\set_{r-1}) \\
\hat{w}_e(\set_r)
\end{matrix}\right].
\end{equation}
Replace $\hat{w}(\set_i)$ with $w(\set_i)$ and invert the system~\eqref{e:asym_combined_mat} to find the valid coefficients $a_i$.
For convenience, we introduce $b_{r+1}=\gamma_e(e)$.
The inverse of $\mathbf{B}_2$ can be written as 
\begin{equation*}
\frac{1}{b_{r+1}}\cdot
\left[\begin{smallmatrix}
\frac{b_2}{(b_2-b_1)b_1} & -\frac{1}{b_2-b_1} & 0 & \cdots & 0 & 0 & 0 \\
-\frac{1}{b_2-b_1} & \frac{b_3-b_1}{(b_3-b_2)(b_2-b_1)} & -\frac{1}{b_3-b_2} & \cdots & 0 & 0 & 0 \\
\vdots & \vdots & \vdots & \ddots & \vdots & \vdots & \vdots\\
0 & 0 & 0 & \cdots & -\frac{1}{b_{r-1}-b_{r-2}} & \frac{b_r-b_{r-2}}{(b_r-b_{r-1})(b_{r-1}-b_{r-2})} & -\frac{1}{b_r-b_{r-1}} \\
0 & 0 & 0 & \cdots & 0 & -\frac{1}{b_r-b_{r-1}} & \frac{b_{r+1}-b_{r-1}}{(b_{r+1}-b_r)(b_r-b_{r-1})}
\end{smallmatrix}\right].
\end{equation*}
Notice that $\mathbf{B}_2^{-1}$ has the same structure as $\mathbf{B}_1^{-1}$ except for the last element as well as the scaling coefficient  $1/b_{r+1}$.
Since $a_1,\cdots,a_r$ need to be non-negative, we are left to prove the following inequalities:
\begin{align}
w_e(\set_1) &\textstyle\geq \frac{b_1}{b_2} w_e(\set_2), \label{e:ineq_a1}\\
w_e(\set_i) &\textstyle\geq \frac{b_{i+1}-b_i}{b_{i+1}-b_{i-1}}w_e(\set_{i-1}) + \frac{b_i-b_{i-1}}{b_{i+1}-b_{i-1}}w_e(\set_{i+1}), \quad\forall i=2,\cdots,r-1, \label{e:ineq_a2}\\
w_e(\set_r) &\textstyle\geq\frac{b_{r+1}-b_r}{b_{r+1}-b_{r-1}} w_e(\set_{r-1}). \label{e:ineq_a3}
\end{align}
By introducing $b_0=0$, the above inequalities can be rewritten and summarized as
\begin{equation}
g_e(b_i) \textstyle\geq \frac{b_{i+1}-b_{i}}{b_{i+1}-b_{i-1}}g_e(b_{i-1}) + \frac{b_{i}-b_{i-1}}{b_{i+1}-b_{i-1}}g_e(b_{i+1}), \quad\forall i=1,\cdots,r,
\end{equation}
which immediately follows (ii) in Lemma~\ref{lemma:concave}. 
\end{proof}

For the two cases discussed in Theorem~\ref{thm:hg2g}, the required number of building gadgets $|\mathcal{Q}_s|$ and $|\mathcal{Q}_a|$ are respectively upper bounded by $2^{|e|-1}-1$ and $2^{|e|}-2$.
For trivial EDVWs, the theorem coincides with the results for cardinality-based splitting functions in~\citep{veldt2020hypergraph} with $|\mathcal{Q}_s|$ and $|\mathcal{Q}_a|$ respectively reducing to $\lfloor|e|/2\rfloor$ and $|e|-1$.
Hence, EDVWs-based splitting functions generally lead to a denser graph than cardinality-based splitting functions in the worst case.
In addition, when $g_e$ is concave as well as symmetric, Theorem~\ref{thm:hg2g} provides two ways for the reduction while the one leveraging symmetric EDVWs-based gadgets requires fewer edges.

\subsection{Sparsifying hypergraph-to-graph reductions}\label{ss:sparsify}

Graph min-cut/max-flow algorithms have a complexity that depends on the number of vertices and the number of edges in the graph~\citep{goldberg1998recent}. 
The reduction procedures discussed above may result in large and dense graphs, thus affecting the efficiency of algorithms applied to the reduced graph.
A workaround is to find a smaller and sparser graph whose cut approximates, rather than exactly recovers, the hypergraph cut~\citep{bansal2019new, benczur1996approximating, benson2020augmented, chekuri2018minimum, kogan2015sketching}.
In~\citep{benson2020augmented}, a sparsification technique is proposed for hypergraphs with submodular cardinality-based splitting functions.
It is based on approximating concave functions using piecewise linear curves and can be generalized to our EDVWs-based case due to the formulation~\eqref{e:EDVWs_w}.
In the following, we discuss this in detail.

As shown in~\eqref{e:hg_cut}, the hypergraph cut function is defined as the sum of splitting functions associated with every hyperedge.
Hence, the problem can be decomposed into approximately modeling each hyperedge using a sparse gadget with a smaller set of auxiliary vertices.
More formally, we would like to find such a gadget whose corresponding splitting function $\hat{w}_e$ as defined in~\eqref{e:gadget_w} approximates the splitting penalties associated with a hyperedge, i.e.,
\begin{equation}\label{e:epsilon_approximate}
w_e(\set) \leq \hat{w}_e(\set) \leq (1+\epsilon)w_e(\set), \quad\forall\set\subseteq e,
\end{equation}
where $\epsilon\geq 0$ is an approximation tolerance parameter.
This is equivalent to finding some gadget splitting function $\tilde{w}_e$ satisfying $\frac{1}{\delta}w_e(\set)\leq\tilde{w}_e(\set)\leq\delta w_e(\set)$ with the correspondence $\hat{w}_e(\set)=\delta\tilde{w}_e(\set)$ and $\epsilon=\delta^2-1$.

We first consider EDVWs-based splitting functions with concave and symmetric $g_e$.
We have shown in Theorem~\ref{thm:hg2g} that these splitting functions can be exactly constructed from a combination of $|\mathcal{Q}_s|$ symmetric EDVWs-based gadgets. 
Following the idea in~\citep{benson2020augmented}, we next show how to approximate these splitting functions using a smaller set of symmetric EDVWs-based gadgets.
Recall from the proof of Theorem~\ref{thm:hg2g} that the combination of $r$ symmetric EDVWs-based gadgets respectively with positive parameters $b_1<\cdots<b_r$ and combination coefficients $a_1,\cdots,a_r$ has the splitting function in the form of~\eqref{e:sym_combined}.
Its continuous extension can be written as follows.
Since it is symmetric, we can just consider the first half of the function
\begin{equation}\label{e:sym_continuous}
\textstyle \hat{g}_e(x) = \sum_{i=1}^r a_i\cdot\min\{x,b_i\}, \quad\text{where } x\in[0,\gamma_e(e)/2]. 
\end{equation}
When $x \leq b_1$, \eqref{e:sym_continuous} can be rewritten as $\hat{g}_e(x)=(\sum_{i=1}^r a_i) \cdot x$; when $x > b_r$, we have $\hat{g}_e(x)=\sum_{i=1}^r a_ib_i$; when $b_{i-1} < x \leq b_i$ for any $i=2,\cdots,r$, we have $\hat{g}_e(x)=\sum_{j=1}^{i-1}a_jb_j + (\sum_{j=i}^r a_j) \cdot x$.
Hence, $\hat{g}_e(x)$ can also be characterized as the lower envelope of a set of $r+1$ linear functions having non-negative decreasing slopes and non-negative increasing intercepts~\citep{benson2020augmented}, i.e.,
\begin{align}\label{e:piecewise_linear}
& \hat{g}_e(x) = \min \{f_1(x),f_2(x),\cdots,f_{r+1}(x)\}, \\
& \text{where}\quad f_i(x)=m_ix+d_i, \nonumber\\
& \textstyle\hspace{3.5em} m_i=\sum_{j=i}^r a_j \text{ for } 1\leq i\leq r \text{ and } m_{r+1}=0, \nonumber\\
& \textstyle\hspace{3.5em} d_1=0 \text{ and } d_i=\sum_{j=1}^{i-1}a_jb_j \text{ for } 2\leq i\leq r+1.  \nonumber
\end{align}
Equivalently, the relations between the coefficients $a_i,b_i$ and $m_i,d_i$ can also be described as $a_i=m_i-m_{i+1}$ and $b_i=(d_{i+1}-d_i)/a_i$.

The sparsification problem can be described as: Find the piecewise linear function $\hat{g}_e$ with the minimum number of pieces that approximates $g_e$ at the points in $\mathcal{Q}_s$.
It can be formulated as follows.
\begin{align}\label{e:sparsification}
& \min\,\, r \\
& \mathrm{s.t.}\quad g_e(x)\leq\hat{g}_e(x)\leq(1+\epsilon)g_e(x), \forall x\in\mathcal{Q}_s, \nonumber\\
& \hspace{2.4em} \hat{g}_e(x) \text{ is defined as}~\eqref{e:piecewise_linear}, \nonumber\\
& \hspace{2.4em} \text{For each } i\in[r+1], f_i(x)=g_e(x) \text{ for some } x\in\{0\}\cup\mathcal{Q}_s. \nonumber
\end{align}
The first constraint is from~\eqref{e:epsilon_approximate}.
The third constraint is added without loss of generality since an improved approximation could be found if there is some $f_i$ strictly greater than $g_e$ at all points in $\{0\}\cup\mathcal{Q}_s$.
The main difference with the corresponding formulation in~\citep{benson2020augmented} is that, for the cardinality-based case there considered, the set $\mathcal{Q}_s$ consists of only integers from $1$ to $\lfloor|e|/2\rfloor$; for the EDVWs-based case, the elements in $\mathcal{Q}_s$ depend on the values of EDVWs thus they are not necessarily integers or evenly spaced.

We can modify the algorithm proposed in~\citep{benson2020augmented} to find an optimal solution to~\eqref{e:sparsification}. 
Here we briefly introduce the procedure and please refer to~\citep{benson2020augmented} for a detailed explanation.
For convenience, we denote the elements in $\mathcal{Q}_s$ by $q_1<q_2<\cdots<q_n$ where $n=|\mathcal{Q}_s|$ and define a series of functions with the $i$th one joining $(q_i,g_e(q_i))$ and $(q_{i+1},g_e(q_{i+1}))$, i.e., $h_i(x)=\frac{g_e(q_{i+1})-g_e(q_i)}{q_{i+1}-q_i}(x-q_i)+g_e(q_i)$.
The last linear piece $f_{r+1}$ is first determined. 
According to~\eqref{e:piecewise_linear} and~\eqref{e:sparsification}, it has a zero slope and passes through $(q_n, g_e(q_n))$, thus we have $f_{r+1}(x)=g_e(q_n)$.
For the first linear piece $f_1$, it goes through the origin according to~\eqref{e:piecewise_linear} and its slope $m_1$ is set to $\frac{g_e(q_1)}{q_1}$ so that it can provide a qualified approximation for as many points in $\mathcal{Q}_s$ as possible.
Then we identify the first point $q_\ell$ in $\mathcal{Q}_s$ that does not have a $(1+\epsilon)$-approximation.
In other words, $f_1(q_i)\leq(1+\epsilon)g_e(q_i)$ holds for $i=1,\cdots,\ell-1$ and becomes invalid since $i=\ell$. 
We stop searching more linear pieces if (i) $\ell\geq n$, or (ii) $g_e(q_n)\leq(1+\epsilon)g_e(q_\ell)$ where (ii) implies that $q_\ell,\cdots,q_n$ have a $(1+\epsilon)$-approximation provided by the last linear piece.
Otherwise, we continue to find the next linear piece $f_2$ in order to cover the rest of the points.
We identify $i^{*}$ such that $h_{i^{*}-1}(q_\ell)\leq(1+\epsilon)g_e(q_\ell)$ and $h_{i^{*}}(q_\ell)>(1+\epsilon)g_e(q_\ell)$.
In other words, the linear function $h_{i^{*}-1}$ provides a qualified approximation at $q_\ell$ (i.e., the first point has not been covered yet), but the following functions $h_i$ for $i \geq i^{*}$ do not.
Hence, $f_2$ should pass through the point $(q_{i^{*}},g_e(q_{i^{*}}))$ and lie between $h_{i^{*}-1}$ and $h_{i^{*}}$ (its slope should be greater than $h_{i^{*}}$'s slope and no greater than $h_{i^{*}-1}$'s slope).
In the meantime, we expect $f_2$ to provide a qualified approximation at $q_\ell$ and have a slope as small as possible so that more points after $q_{i^{*}}$ can be covered.
Therefore, we set $f_2$ to the line joining $(q_\ell,(1+\epsilon)g_e(q_\ell))$ and $(q_{i^{*}},g_e(q_{i^{*}}))$.
We refer the reader to Figure~4 in \citep{benson2020augmented} for an illustration.
Then we update $q_\ell$ to the new start point in $\mathcal{Q}_s$ that does not have a $(1+\epsilon)$-approximation yet, and check the stopping criterion.
We repeat the process of picking $f_2$ to add more linear pieces until we meet the stopping criterion.

\begin{figure*}[t!]
\centering
\includegraphics[scale=0.5]{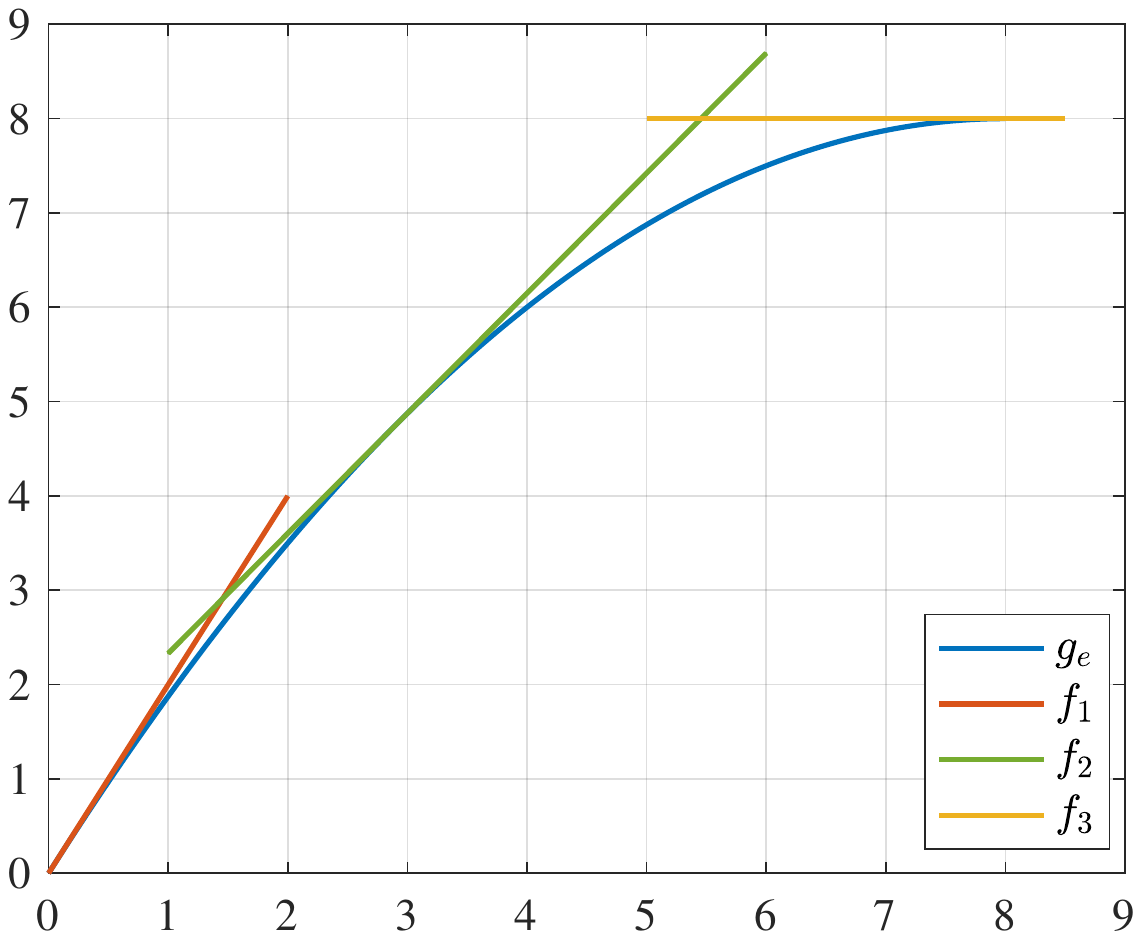}
\captionsetup{margin=0.3cm}
\caption{An example where $g_e(x)=-0.125x^2+2x$. We want to find a $(1+\epsilon)$-approximation for $g_e(x)$ everywhere in the range $[0,8]$ where we set $\epsilon=0.1$. The last linear piece $f_3$ has a zero slope and passes through $(8,8)$, thus $f_3(x)=8$. The first linear piece $f_1$ is tangent to $g_e$ at the origin, hence $f_1(x)=2x$. At the point $x=1.4545$, $f_1(x)=(1+\epsilon)g_e(x)$, meaning that $f_1$ provides a qualified approximation for $g_e(x)$ in the range $[0,1.4545]$. The second linear piece $f_2$ passes through $(1.4545, 2.909)$ and is tangent to $g_e$ at $(2.909, 4.7602)$, namely $f_2(x)=1.2727x+1.0579$. At $x=5.2894$, $f_2(x)=(1+\epsilon)g_e(x)$, hence $f_2$ provides a qualified approximation for $g_e(x)$ in the range $[1.4545,5.2894]$. The rest of the points in the range $[5.2894,8]$ have been covered by $f_3$.}
\label{fig:example}	
\end{figure*}

We can also ignore the particular positions of elements in $\mathcal{Q}_s$ and try to provide a $(1+\epsilon)$-approximation for $g_e(x)$ everywhere in the range $[0,\gamma_e(e)/2]$.
This approach has two benefits: 
(i) It avoids building the set $\mathcal{Q}_s$. 
(ii) If multiple hyperedges share the same continuous extension $g_e$, we can find their sparsified reductions all at once.
The procedure of finding the set of linear pieces can be modified as follows.
The last linear piece should be $f_{r+1}(x)=g_e(\gamma_e(e)/2)$. 
The first linear piece $f_1$ is tangent to $g_e$ at the origin.
We identify the value $z\in[0,\gamma_e(e)/2]$ that satisfies $f_1(z)=(1+\epsilon)g_e(z)$.
If no such a $z$ exits or $g_e(\gamma_e(e)/2)\leq(1+\epsilon)g_e(z)$, we stop.
Otherwise, we add the next linear piece $f_2$ which is selected to pass through $(z,(1+\epsilon)g_e(z))$ and be tangent to $g_e$ as some point greater than $z$.
We repeat the process of choosing $f_2$ to add more linear pieces until the whole range $[0,\gamma_e(e)/2]$ has been covered.
An illustrative example is given in Figure~\ref{fig:example}. 
 
For EDVWs-based splitting functions with concave and possibly asymmetric $g_e$, they can be approximated using a smaller set of asymmetric EDVWs-based gadgets.
It has been proved in~\citep{benson2020augmented} that the continuous extension of the splitting function in the form of~\eqref{e:asym_combined} is also piecewise linear, thus we can adopt a similar reduction procedure as described above.

Moreover, a direct extension of Theorem 4.1 in~\citep{benson2020augmented} is that the number of symmetric/asymmetric EDVWs-based gadgets needed to approximate an EDVWs-based splitting function with concave, symmetric/asymmetric $g_e$ is upper bounded by $O(\log_{1+\epsilon}\gamma_e(e))$ which behaves as $\epsilon^{-1}\log\gamma_e(e)$ as $\epsilon$ approaches zero.
This bound could be further improved if a specific concave function $g_e$ is chosen. 

\section{Experiments}\label{s:exp}

\begin{figure*}[t!]
\centering
\includegraphics[scale=0.35]{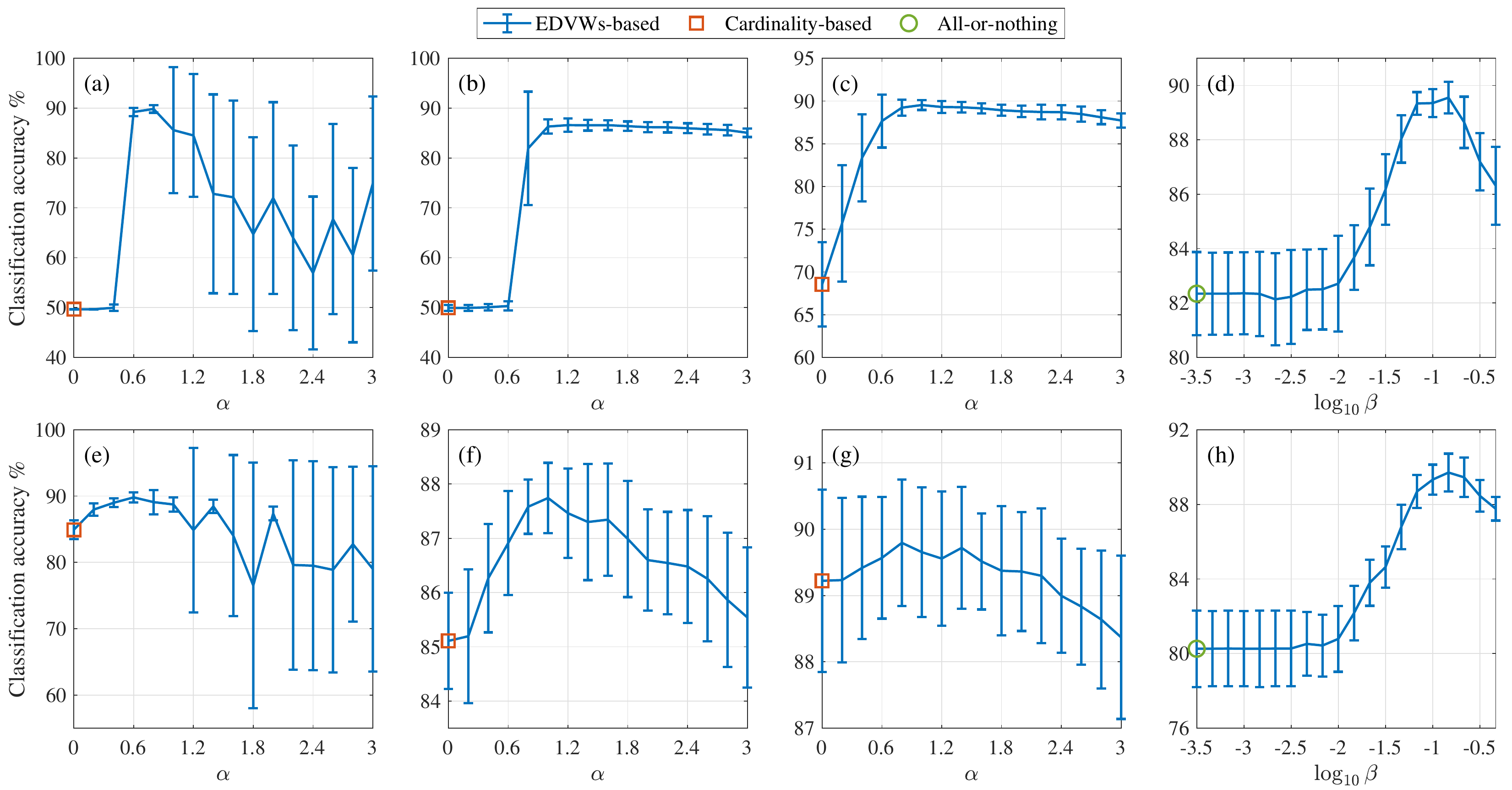}
\captionsetup{margin=0.3cm}
\caption{Classification performance as a function of the parameters $\alpha$ and $\beta$. For the two rows, the fraction of labeled vertices is respectively set to $0.3$ and $0.5$. (a) and (e) correspond to the splitting function $w_e(\set)=\gamma_e(\set)\cdot\gamma_e(e\setminus\set)$; (b) and (f) correspond to the splitting function $w_e(\set)=\min\{\gamma_e(\set),\gamma_e(e\setminus\set)\}$; (c-d) and (g-h) correspond to the splitting function $w_e(\set)=\min\{\gamma_e(\set),\gamma_e(e\setminus\set),\beta\gamma_e(e)\}$ where we fix $\beta=0.15$ in (c), (g) and we fix $\alpha=1$ in (d), (h).}
\label{fig:res1}	
\end{figure*}

We show the effects of introducing EDVWs into hypergraph cuts via numerical experiments.
We consider the binary classification of hypergraph vertices: Given the labels of a subset of vertices $\V^L=\V^L_1\cup\V^L_2$ where $\V^L_1$ and $\V^L_2$ respectively consist of all labeled vertices in two classes, the task is to estimate the labels of the rest of the vertices.
The problem can be formulated as a generalized hypergraph minimum $s$-$t$ cut problem with multiple source and sink vertices, i.e.,
\begin{equation}\label{e:hg_st_m}
\textstyle \min_{\emptyset\subset\set\subset\V} \, \cut_{\hg}(\set) \st \V^L_1\subseteq\set, \V^L_2\subseteq\V\setminus\set.
\end{equation}
Then the vertices in $\set$ and $\V\setminus\set$ are classified into two categories, respectively.
We consider the first three EDVWs-based splitting functions listed in Table~\ref{table:sf}, namely $w_e(\set)=\gamma_e(\set)\cdot\gamma_e(e\setminus\set)$, $w_e(\set)=\min\{\gamma_e(\set),\gamma_e(e\setminus\set)\}$ and $w_e(\set)=\min\{\gamma_e(\set),\gamma_e(e\setminus\set),b\}$, which are symmetric and graph reducible.
For any of them, we can reduce the hypergraph to a graph sharing the same cut properties.
Following the idea in~\citep{blum2001learning}, we introduce another two vertices -- a super-source $s$ and a super-sink $t$ -- into the reduced graph.
For every $v\in\V^L_1$, add an edge of weight infinity from $s$ to $v$; for every $v\in\V^L_2$, add an edge of weight infinity from $v$ to $t$.
Then problem~\eqref{e:hg_st_m} can be converted into a common minimum $s$-$t$ cut problem defined on the new graph in the form of~\eqref{e:g_st}. 
We solve the graph minimum $s$-$t$ cut problem using the highest-label preflow-push algorithm implemented in the NetworkX package~\citep{hagberg2008exploring}.

We adopt the $20$ Newsgroups dataset and consider the task of document classification.
The dataset contains documents in different categories and we consider the documents in categories ``rec.motorcycles" and ``sci.space".
We extract the $200$ most frequent words in the corpus after removing stop words and words appearing in $>3\%$ and $<0.2\%$ of the documents.
We then remove a small fraction of documents that do not contain the selected words and finally get $1852$ documents with $932$ and $920$ documents in two classes, respectively.
To model the text dataset using hypergraphs with EDVWs, we consider documents as vertices and words as hyperedges.
A document (vertex) belongs to a word (hyperedge) if the word appears in the document.
The EDVWs are taken as the corresponding tf-idf (term frequency-inverse document frequency) values~\citep{leskovec2020mining} to the power of $\alpha$, where $\alpha$ is a tunable parameter.
More precisely, we set
\begin{equation}
\gamma_e(v) = \text{tf-idf}(e,v)^\alpha, \quad \text{where } \text{tf-idf}(e,v) = \text{tf}(e,v) \cdot \text{idf}(e).
\end{equation}
The term frequency $\text{tf}(e,v)$ is the relative frequency of word $e$ in document $v$.
The inverse document frequency $\text{idf}(e)$ measures the informativeness of word $e$, i.e., if it is common or rare across all documents.
Hence, the tf-idf values are able to reflect the importance of a word to a document in a corpus and thus an ideal choice for EDVWs.
We adopt the TfidfTransformer function in the scikit-learn package with default parameters to compute the tf-idf values.
The parameter $\alpha$ is introduced for extra flexibility.
When $\alpha=0$, we get the trivial EDVWs and the splitting functions reduce to cardinality-based ones.
For the splitting function $w_e(\set)=\min\{\gamma_e(\set),\gamma_e(e\setminus\set),b\}$, we set $b=\beta\gamma_e(e)$ where $\beta$ is also adjustable.
If a small enough $\beta$ is selected, the splitting function reduces to the all-or-nothing case.

\begin{figure*}[t!]
\centering
\includegraphics[scale=0.35]{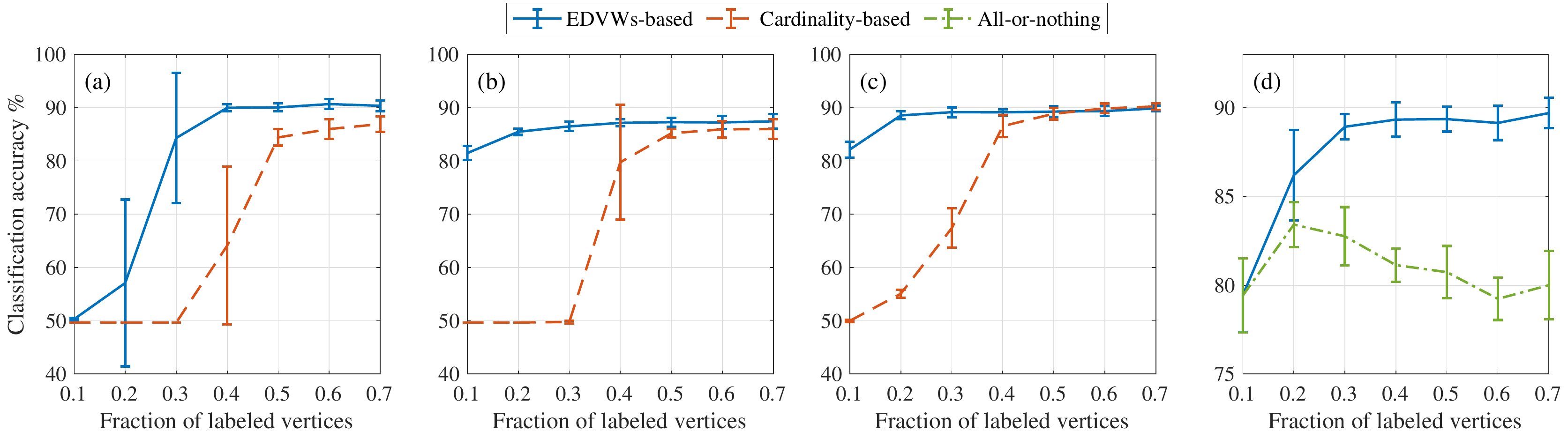}
\captionsetup{margin=0.3cm}
\caption{Performance comparison between the proposed splitting functions and existing ones. (a), (b) and (c-d) respectively correspond to the splitting functions $w_e(\set)=\gamma_e(\set)\cdot\gamma_e(e\setminus\set)$, $w_e(\set)=\min\{\gamma_e(\set),\gamma_e(e\setminus\set)\}$ and $w_e(\set)=\min\{\gamma_e(\set),\gamma_e(e\setminus\set),\beta\gamma_e(e)\}$. We fix $\beta=0.15$ in (c) and fix $\alpha=1$ in (d).
The red curves (cardinality-based) and the green curve (all-or-nothing) respectively correspond to the cases when $\alpha=0$ and when $\beta$ is small enough ($\beta=10^{-3.5}$ here).
For the blue curves (EDVWs-based), a $5$-fold cross-validation is adopted in (a-c) to search the optimal $\alpha$ and in (d) to search the optimal $\beta$.}
\label{fig:res2}	
\end{figure*} 

Figure~\ref{fig:res1} shows the effects of the parameters $\alpha$ and $\beta$ on the classification performance.
We plot the average classification accuracy and the standard deviation over $10$ realizations which adopt different sets of labeled vertices.
We respectively set the fraction of labeled vertices to $0.3$ in (a-d) and $0.5$ in (e-h).
The three considered EDVWs-based splitting functions respectively correspond to (a) (e), (b) (f), and (c-d) (g-h).
For the third splitting function, we fix $\beta=0.15$ to observe the influence of $\alpha$ in (c) (g) and fix $\alpha=1$ to test $\beta$ in (d) (h).
It can be observed that, for all of them, the best performance is achieved for intermediate values of $\alpha$ or $\beta$ rather than the extreme cases when the EDVWs-based splitting function reduces to a cardinality-based splitting function or the all-or-nothing splitting function.

Figure~\ref{fig:res2} provides a more direct comparison between the proposed EDVWs-based splitting functions and existing ones.
For EDVWs-based splitting functions, we adopt a $5$-fold cross-validation to find the optimal $\alpha$ and $\beta$.
In (a), we search the optimal $\alpha$ over the set $\{0:0.2:3\}$.
In (b-c), we search $\alpha$ over $\{0:0.2:5\}$.
In (d), we search $\beta$ over $20$ equally spaced values between $10^{-3.5}$ and $10^{-1/3}$ in the log scale.
We can see that adopting EDVWs-based splitting functions improves the classification performance over a wide range of train-test split ratios.

\section{Conclusion}\label{s:end}

We developed a framework for incorporating EDVWs into hypergraph cut problems and generalized reduction as well as sparsification techniques recently proposed for cardinality-based splitting functions.
Through a real-world text mining application, we showcased the value of the introduction of EDVWs.
There are numerous directions for future work:
(i) As mentioned in the introduction, hypergraph minimum cuts can be used to solve various real-world applications~\citep{catalyurek1999hypergraph, ding2008image, karypis1999multilevel, kim2011higher} or as subroutines in many machine learning algorithms~\citep{liu2021strongly, veldt2020minimizing}.
Hence, it would be desirable to apply the proposed framework to these applications and algorithms and evaluate the performance.
(ii) Another direction is to further extend the proposed framework to multiway cuts~\citep{chekuri2011approximation, okumoto2012divide, veldt2020hypergraph, zhao2005greedy} or other types of cuts such as normalized cuts~\citep{li2017inhomogeneous, li2018submodular, fountoulakis2021local}.
(iii) An open problem is whether all submodular splitting functions are graph reducible~\citep{veldt2020hypergraph}.

\section*{\small Declarations}

\begin{backmatter}

\section*{Abbreviations}
EDVWs: Edge-dependent vertex weights; 
VLSI: Very large scale integration;
tf-idf: Term frequency-inverse document frequency

\section*{Acknowledgements}
Not applicable

\section*{Authors' contributions}
YZ developed the methodology, conducted the experiments, and wrote the initial manuscript.
SS supervised the study, contributed to the discussions, and edited the manuscript.
Both authors read and approved the final manuscript.

\section*{Funding}
This work was supported by NSF under award CCF 2008555.

\section*{Availability of data and materials}
The code and data are available at~\url{https://github.com/yuzhu2019/hg_cut_edvws}. 

\section*{Competing interests}
The authors declare that they have no competing interests.

\section*{Consent for publication}
Not applicable

\section*{Ethics approval and consent to participate}
Not applicable

\bibliographystyle{spbasic} 
\bibliography{references.bib}        
\nocite{label}

\end{backmatter}
\end{document}